\documentclass[11pt,a4paper]{amsart}

\usepackage{lineno}
\modulolinenumbers[5]

\usepackage{latexsym,amsfonts,amsmath,amsthm,mathrsfs}
\usepackage{enumitem}

\newtheorem{thm}{Theorem}

\newtheorem{cor}{Corollary}

\theoremstyle{remark}

\theoremstyle{definition}

\newcommand{\disp}{\operatorname{disp}}

\newcommand{\vol}{\operatornamewithlimits{vol}}

\begin{document}

\title{A note on minimal dispersion of point sets in the unit cube} 
\author{Jakub Sosnovec}
\address{Department of Computer Science,
University of Warwick}
\email{j.sosnovec@warwick.ac.uk}
\thanks{This work was supported by the Leverhulme Trust 2014 Philip Leverhulme Prize of Daniel Kr{\'a}l'.}

\begin{abstract}
We study the dispersion of a point set, a notion closely related to the discrepancy. Given a real $r\in(0,1)$ and an integer $d\geq 2$, let $N(r,d)$ denote the minimum number of points inside the $d$-dimensional unit cube $[0,1]^d$ such that they intersect every axis-aligned box inside $[0,1]^d$ of volume greater than $r$. We prove an upper bound on $N(r,d)$, matching a lower bound of Aistleitner et al. up to a multiplicative constant depending only on $r$. This fully determines the rate of growth of $N(r,d)$ if $r\in(0,1)$ is fixed.
\end{abstract}

\maketitle

\section{Introduction}
The geometric discrepancy theory is the study of distributions of finite point sets and their irregularities \cite{matousek}. 
In this note, we study a notion closely related to discrepancy, the dispersion of a point set.

The problem of finding the area of the largest empty axis-parallel rectangle amidst a set of points in the unit square is a classical problem in computational geometry. The algorithmic version has been introduced by Naamad et al. \cite{nlh} and several other algorithms have been proposed over the years, such as \cite{cdl}. The problem naturally generalizes to multi-dimensional variant, where the task is to determine the volume of the largest empty box amidst a set of points in the $d$-dimensional unit cube. 

An active line of research concerns general bounds on the volume of largest empty box for \emph{any} set of points, in terms of the dimension and the number of points. An upper bound thus amounts to exhibiting an example of a point set such that the volume of any empty box is small, while the lower bound asks for the minimal value such that every set of points of given cardinality allows an empty box of that volume. The first results in this direction were given by Rote and Tichy~\cite{rt}. Dumitrescu and Jiang~\cite{DuJi13} first showed a non-trivial lower bound, which was later improved by Aistleitner et al.~\cite{ahr}. An upper bound by Larcher is also given in~\cite{ahr}. Rudolf~\cite{r} has found an upper bound with much better dependence on the dimension. The problem has recently received attention due to similar questions appearing in approximation theory~\cite{nr}, discrepancy theory~\cite{dt,nx} and approximation of $L_p$-norms and Marcinkiewicz-type discretization~\cite{teml1,teml2,teml3}.

The following reformulation is of interest in the applications in approximation theory: If we fix $r\in(0,1)$ to be the ``allowed volume'', how many points in $\mathbb{R}^d$ are needed to force that any empty box has volume at most $r$, in terms of $d$? In other words, we ask for the minimum number of points needed to intersect every box of volume greater than $r$. 
In this note, we establish the optimal asymptotic growth of this quantity for $r$ fixed.

\subsection{Notation}
For a positive integer $k$, let $[k]$ denote the set $\{1,2,\ldots,k\}$. If $x\in\mathbb{R}^d$ is a vector, $(x)_i$ denotes the $i$-th coordinate of $x$. Let $\mathbf{1}$ denote the vector $(1,\ldots,1)\in\mathbb{R}^d$ (where $d$ will always be clear from context). 

For $d\geq 2$, we use $[0,1]^d$ to denote the $d$-dimensional unit cube. A box $B=I_1\times\cdots \times I_d\subseteq [0,1]^d$ is \emph{open} (\emph{closed}) if all of $I_1,\ldots,I_d$ are open (closed) intervals. We define $\mathscr{B}_d$ as the family of all open boxes inside $[0,1]^d$. 

For a set $T$ of $n$ points in $[0,1]^d$, the volume of the largest open axis-parallel box avoiding all points from $T$ is called the \emph{dispersion} of $T$ and is defined as
\begin{equation}\label{def_disp}\disp(T)=\sup_{B\in\mathscr{B}_d,\, B\cap T=\emptyset}\vol(B),\end{equation}
where $\vol(I_1\times\cdots\times I_d)=|I_1|\cdots|I_d|$. Note that the supremum in \eqref{def_disp} is attained, since there are only finitely many inclusion-maximal boxes $B\in\mathscr{B}_d$ avoiding $T$. 

We further define the minimal dispersion for any point set as
\begin{equation}\label{def_disp2}\disp^*(n,d)=\inf_{T\subset[0,1]^d,\,|T|=n}\disp(T).\end{equation}
Again, observe that the infimum in \eqref{def_disp2} is actually attained, since any sequence of $n$-element point sets inside $[0,1]^d$ has a convergent subsequence.

The quantity we mainly consider in this paper is the inverse function of the minimal dispersion,
\[N(r,d)=\min\{n\in\mathbb{N}:\disp^*(n,d)\leq r\},\]
where $r\in(0,1)$. Determining $N(r,d)$ thus corresponds to the question of how many points are needed to intersect every box of volume greater than $r$.

We remark that the functions $\disp^*(n,d)$ and $N(r,d)$ are of course tightly connected and any bounds on them translate between each other. 

\subsection{Previous work}
The trivial lower bound on $\disp^*(n,d)$ is $1/(n+1)$, since we can split the cube into $n+1$ parts and use the pigeonhole principle. This was improved in \cite{ahr} to
\begin{equation}\label{AHRineq}
\disp^*(n,d)\geq \frac{\log_2d}{4(n+\log_2d)}.
\end{equation}

The inequality \eqref{AHRineq} can be reformulated to give a lower bound on $N(r,d)$ for $r\in(0,1/4)$,
\begin{equation}\label{AHRineq2}N(r,d)\geq \frac{1-4r}{4r}\log_2 d.\end{equation}
In order to show \eqref{AHRineq}, the same authors prove an auxiliary lemma, which is equivalent to that
\begin{equation*}
N(1/4,d)\geq \log_2(d+1).
\end{equation*}
Thus for $r\in(0,1/4]$ fixed, we have $N(r,d)=\Omega(\log d)$.

On the other hand, Larcher (with proof also given in \cite{ahr}) has found the following upper bound of the right order in $n$,
\begin{equation*}
\disp^*(n,d)\leq\frac{2^{7d+1}}{n}.
\end{equation*}
For large enough $n$, this is the best bound. However, the dependence on $d$ is exponential, which was recently greatly improved by Rudolf~\cite{r} as follows, 
\begin{equation}\label{rudolf_ineq}
\disp^*(n,d)\leq\frac{4d}{n}\log\left(\frac{9n}{d}\right).
\end{equation}

The best upper bound on the inverse of the minimal dispersion for $r\in(0,1/4]$ fixed, which is the setting of this paper, is the reformulation of~\eqref{rudolf_ineq},
\[N(r,d)\leq 8d q\log(33q),\]
where $q=1/r$. Thus for $r\in(0,1/4]$ fixed, we have $N(r,d)=O(d)$.

\subsection{Our results}
We are interested in determining the asymptotic growth of $N(r,d)$ for $r\in(0,1)$ fixed and $d$ tending to infinity. As it turns out, the rate of growth is different for $r\in(0,1/4]$ and $r\in(1/4,1)$. 
    
First, we show that for $r\in(1/4,1)$, the number $N(r,d)$ is in fact bounded by a constant depending only on $r$. This is in sharp contrast with \eqref{AHRineq2} which implies that $N(r,d)\rightarrow\infty$ for $d\rightarrow\infty$ if $r<1/4$.
\begin{thm}\label{thm1}
For every $r\in(1/4,1)$, there exists a constant $c_r\in\mathbb{N}$ such that for every $d\geq 2$,
\[N(r,d)\leq c_r.\]
In particular, $c_r$ can be set as \[\biggl\lfloor \frac{1}{(r-1/4)}\biggr\rfloor + 1.\]
\end{thm}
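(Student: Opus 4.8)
The plan is to exhibit an explicit point set of size $c_r = \lfloor 1/(r-1/4)\rfloor + 1$ in $[0,1]^d$ whose dispersion is at most $r$, working dimension by dimension. The key observation is that to intersect an axis-parallel box $B = I_1 \times \cdots \times I_d$, it suffices to find one coordinate $i$ and one point $x$ of our set with $(x)_i \in I_i$ --- actually we need the same point to land in $I_j$ for all $j$, so the reasoning must be a bit more careful. The natural strategy is: if a box avoids all $n$ points, then in each coordinate the ``slab'' structure forces the box to be thin in enough directions that its volume drops below $r$. Concretely, I would place the points so that their projections to each coordinate axis form a fine net (e.g. the $k$-th point has $i$-th coordinate roughly $k/(n+1)$, or some Kronecker/shifted arrangement), and then argue that an empty box can have length $> 1/2$ in at most one coordinate, length in $(1/4, 1/2]$ in at most a bounded number of further coordinates, and so on, yielding a geometric-type bound on the volume.

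The cleaner route, which I expect to be the intended one, is the following. Suppose $B = I_1 \times \cdots \times I_d$ is an open box avoiding all $n = c_r$ points $x^{(1)}, \ldots, x^{(n)}$. For the box to avoid point $x^{(k)}$, there must be a coordinate $i = i(k)$ with $(x^{(k)})_{i(k)} \notin I_{i(k)}$. This gives a map from $[n]$ to $[d]$; since we only have $n$ points, at most $n$ coordinates are ``used'', but more importantly I would choose the point set so that in a single coordinate the points are spread as $1/(2n), 3/(2n), \ldots$ or similar, so that if $(x^{(k)})_i \notin I_i$ for some $k$, then $I_i$ misses a point of a $1/n$-net, forcing $|I_i|$ to be controlled. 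Then I bound $\vol(B) = \prod_i |I_i|$ by splitting the coordinates according to how many points each ``blocks'': if coordinate $i$ must avoid $m_i$ of the points, then $|I_i| \le 1 - (\text{something like } m_i/(2n))$ or, in the regime where one coordinate blocks many points at once, $|I_i| \le 1/2$ whenever it blocks $\ge 2$ points. Summing $\sum_i m_i = n$ and optimizing the product gives the worst case: one coordinate of length $\le 1$ blocking all but one point, and the remaining blocked point handled in a second coordinate of length $\le 1/2$ minus a correction --- leading to $\vol(B) \le 1/2 + 1/(2n) \le \ldots$, which one checks is $\le r$ exactly when $n \ge 1/(r - 1/4) $, i.e. $n = c_r$.

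The main obstacle is pinning down the right extremal configuration of the empty box and making the product estimate tight enough to hit the exact constant $\lfloor 1/(r-1/4)\rfloor + 1$ rather than something off by a factor; in particular one must handle the interaction between a coordinate that is long (length close to $1$) and simultaneously blocks many points versus distributing the ``blocking'' across several short coordinates, and verify that the bound $r > 1/4$ is precisely what is needed for the argument to close (the $1/4$ presumably arising as $\tfrac12 \cdot \tfrac12$ from two coordinates each of length $\le 1/2$). A secondary technical point is choosing the point set's coordinates to be in ``general position'' (all distinct, and no point's coordinate equal to an endpoint that could be exploited) so that the open-box condition is genuinely used; this is routine but should be stated. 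I would close by verifying the stated formula for $c_r$ by a direct substitution into the final inequality.
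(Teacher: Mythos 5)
There is a genuine gap: the proposal never fixes a point set or carries out the volume estimate, and the one quantitative route it sketches does not close. You propose to assign each of the $n$ points a coordinate that misses it, let $m_i$ be the number of points assigned to coordinate $i$, bound $|I_i|\le 1-\Theta(m_i/n)$, and optimize $\prod_i|I_i|$ subject to $\sum_i m_i=n$. But that optimization does not give $1/4+O(1/n)$: spreading the blocking over $k$ coordinates with $m_i=n/k$ each gives $\prod_i(1-\Theta(m_i/n))\approx(1-1/k)^k\to 1/e>1/4$, so this argument alone cannot reach any $r\in(1/4,1/e)$. The missing structural idea (which is what the paper exploits) is that if the points are placed on the \emph{diagonal}, an interval $I_j$ that misses one diagonal point must miss an entire prefix or an entire suffix of them; hence covering all points forces one coordinate to miss a suffix and another to miss an overlapping prefix, so only \emph{two} coordinates ever enter the product. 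Your closing inequality is also wrong as written: $\vol(B)\le 1/2+1/(2n)$ can never be $\le r$ for $r<1/2$; the correct target is $1/4+\delta$ with $\delta=r-1/4$.

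For comparison, the paper's proof takes $X=\{k\delta\cdot\mathbf 1:k\in[k_0]\}\cup\{\tfrac12\cdot\mathbf 1\}$ with $\delta=r-1/4$ and $k_0=\lfloor 1/\delta\rfloor$. The center point yields a coordinate $i$ with $1/2\notin I_i$, so after a symmetry reduction $I_i\subset[0,1/2]$ with right endpoint $\alpha\le 1/2$ and $|I_i|\le\alpha$. Taking $k$ maximal with $k\delta<\alpha$, the point $k\delta\cdot\mathbf 1$ lies in $I_i$ in coordinate $i$, so some other coordinate $j$ has $k\delta\notin I_j$, giving $|I_j|\le 1-k\delta\le 1-\alpha+\delta$; then $\vol(B)\le\alpha(1-\alpha+\delta)\le 1/4+\delta=r$. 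Your intuition that the threshold $1/4$ arises as $\tfrac12\cdot\tfrac12$ from two coordinates is correct, but the proposal as written does not identify the two coordinates, does not justify their length bounds, and its extremal analysis over many coordinates fails.
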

Then we show an upper bound for $r\in(0,1/4]$ which matches the lower bound \eqref{AHRineq2} up to a multiplicative constant depending only on $r$.
\begin{thm}\label{thm2}
For every $r\in(0,1/4]$, there exists a constant $c'_r\in\mathbb{R}$ such that for every $d\geq 2$,
\[N(r,d)\leq c'_r\log d.\]
In particular, $c'_r $ can be set as $q^{q^2+2}(4\log q + 1)$, where $q=\lceil 1/r\rceil$.
\end{thm}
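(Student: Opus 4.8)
The plan is to reduce the theorem to a purely combinatorial hitting problem on a grid in $\{1,\dots,q-1\}^d$, and then resolve that problem by a probabilistic union‑bound argument whose size is only logarithmic in $d$, because the family of ``hard'' boxes turns out to be only polynomially large.

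Write $q=\lceil 1/r\rceil$, so that $r\ge 1/q$ and $q\ge 4$. Since every box of volume greater than $r$ has volume greater than $1/q$, it is enough to produce a set of $O(\log d)$ points meeting every open box $B=I_1\times\dots\times I_d\subseteq[0,1]^d$ with $\vol(B)>1/q$; I would take all candidate points from $G^d$, where $G=\{1/q,2/q,\dots,(q-1)/q\}$. Two elementary observations do most of the work. First, because $\vol(B)\le|I_j|$ for each $j$, every $|I_j|>1/q$; an open subinterval of $[0,1]$ of length exceeding $1/q$ meets $G$, and in fact $I_j\cap G$ is a block of $s_j\ge1$ consecutive points of $G$ satisfying $|I_j|\le(s_j+1)/q$, so that $\prod_j(s_j+1)\ge q^d\vol(B)>q^{d-1}$. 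Second, if $I_j\cap G\ne G$ then $|I_j|\le1-1/q$, and since $(1-1/q)^f>1/q$ forces $f<q\log q$, at most $q\log q$ coordinates are ``non‑full''. Consequently it suffices to find $P\subseteq G^d$ that, for every such $B$, contains a point of the combinatorial box $\prod_j(I_j\cap G)$.

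Identifying $G$ with $\{1,\dots,q-1\}$, a combinatorial box that can actually occur is specified by its set $\bar F$ of non‑full coordinates, with $|\bar F|<q\log q$ by the second observation, together with one of fewer than $q^2$ proper subblocks of $\{1,\dots,q-1\}$ for each $j\in\bar F$; hence the number of boxes that must be hit is at most $\sum_{f<q\log q}\binom df(q^2)^f\le(q\log q+1)(dq^2)^{q\log q}$, which is polynomial in $d$. I would then let $P$ consist of $n$ independent, uniformly random points of $G^d$. A single random point lies in a fixed occurring box with probability $\prod_j s_j/(q-1)$, and combining the elementary inequality $s/(q-1)\ge\bigl((s+1)/q\bigr)^2$ for $1\le s\le q-2$ with the bound $\prod_j(s_j+1)>q^{d-1}$ shows this probability is at least $q^{-2}$. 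A union bound then shows that $P$ misses some occurring box with probability at most $(q\log q+1)(dq^2)^{q\log q}(1-q^{-2})^n$, which is less than $1$ once $n$ exceeds a suitable $q$‑dependent constant times $\log d$. Any realization of $P$ attaining this is the desired point set, so $N(r,d)\le c_r'\log d$; tracking the constants gives a $c_r'$ of order $q^3\log q$, and recovering the larger explicit value claimed in the statement requires only cruder estimates at each step.

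I expect the only genuine obstacle to be the counting in the third paragraph: the totality of combinatorial boxes has size roughly $(q^2/2)^d$, and a naive union bound over all of them would reprove nothing better than $N(r,d)=O(d)$. The whole improvement to $O(\log d)$ rests on the second elementary observation — that a box of volume greater than $1/q$ coincides with $[0,1]^d$ in all but $O_q(1)$ coordinates — so that only $d^{O_q(1)}$ essentially distinct boxes remain to be killed. I would also double‑check that the passage to the grid loses nothing at the boundary, but since the boxes in $\mathscr{B}_d$ are open the intersections $I_j\cap G$ genuinely certify the required containments, and this is routine.
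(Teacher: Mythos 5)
Your proof is correct, and while it follows the paper's high-level strategy — choose $n$ points independently and uniformly from the grid $\{1/q,\dots,(q-1)/q\}^d$, observe that a box of volume greater than $1/q$ can have a ``short'' edge in only $O_q(1)$ coordinates so that only $d^{O_q(1)}$ essentially distinct boxes need to be hit, and finish with a union bound — the key estimate is genuinely different and in fact sharper. The paper replaces each box $B\in\mathscr{B}_d^r$ by a canonical witness $Q\in\mathscr{Q}_d^r$ that pins down a \emph{single} grid value $k_{j_\ell}/q$ in each of the $M\leq q^2$ restricted coordinates; a random grid point then hits $Q$ only with probability $(q-1)^{-M}\geq q^{-q^2}$, which is what forces the constant $q^{q^2+2}(4\log q+1)$. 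You instead retain the full trace $I_j\cap G$, a block of $s_j\geq 1$ consecutive grid points, and convert the volume hypothesis into the multiplicative bound $\prod_j(s_j+1)>q^{d-1}$; combined with the pointwise inequality $s/(q-1)\geq\bigl((s+1)/q\bigr)^2$ (valid for $1\leq s\leq q-1$ and $q\geq 3$, with equality at $s=1$ and $s=q-1$) this gives a hitting probability of at least $q^{-2}$ \emph{uniformly over all boxes}, no matter how many coordinates are restricted. Since your enumeration of occurring combinatorial boxes (a set $\bar F$ of at most $q\log q$ non-full coordinates together with a nonempty proper subblock in each) is still only of size $(dq^2)^{O(q\log q)}$, the union bound closes with $n$ of order $q^3\log q\cdot\log d$ rather than $q^{q^2+2}\log q\cdot\log d$, and the stated bound with the paper's larger constant follows a fortiori. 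All the supporting steps check out: an open interval of length exceeding $1/q$ inside $[0,1]$ necessarily contains a point of $G$, a coordinate with $I_j\cap G\neq G$ has $|I_j|\leq 1-1/q$, and $(1-1/q)^f>1/q$ indeed forces $f<q\log q$. So this is a valid proof of Theorem~\ref{thm2} that improves the dependence of $c'_r$ on $r$ from $q^{O(q^2)}$ to polynomial in $q$.
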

This fully determines the asymptotic growth of $N(r,d)$ in terms of $d$.
\begin{cor}
Let $r\in(0,1)$ be fixed and $d$ tend to infinity. If $r\in(0,1/4]$, then $N(r,d)=\Theta(\log d)$, otherwise $N(r,d)=O(1)$.
\end{cor}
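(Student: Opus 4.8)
The plan is to assemble the Corollary directly from the estimates already collected in the Introduction, splitting on whether $r$ lies in $(0,1/4]$ or in $(1/4,1)$.

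First I would dispose of the range $r\in(1/4,1)$. Theorem~\ref{thm1} gives $N(r,d)\le c_r$ for every $d\ge 2$, where $c_r=\lfloor 1/(r-1/4)\rfloor+1$ depends only on $r$. Since $r$ is treated as fixed, this is exactly the assertion $N(r,d)=O(1)$, and nothing further is needed in this case.

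Next I would treat $r\in(0,1/4]$, where both a matching upper and a matching lower bound are required. For the upper bound, Theorem~\ref{thm2} yields $N(r,d)\le c'_r\log d$ with $c'_r=q^{q^2+2}(4\log q+1)$ and $q=\lceil 1/r\rceil$, a constant depending only on $r$; hence $N(r,d)=O(\log d)$. For the lower bound I would distinguish two subcases. If $r\in(0,1/4)$, the reformulated Aistleitner--Hinrichs--Rudolf inequality~\eqref{AHRineq2} gives $N(r,d)\ge \frac{1-4r}{4r}\log_2 d$, and since $r<1/4$ the coefficient $\frac{1-4r}{4r}$ is a positive constant, so $N(r,d)=\Omega(\log d)$. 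If instead $r=1/4$, the coefficient in~\eqref{AHRineq2} vanishes, so I would fall back on the auxiliary estimate $N(1/4,d)\ge\log_2(d+1)$ recorded just after~\eqref{AHRineq2}, which again gives $N(1/4,d)=\Omega(\log d)$. Combining the upper and lower bounds yields $N(r,d)=\Theta(\log d)$ for every fixed $r\in(0,1/4]$.

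There is no genuine obstacle here: the Corollary is a bookkeeping consequence of Theorems~\ref{thm1} and~\ref{thm2} together with the previously known lower bounds. The only point demanding a moment's care is the boundary value $r=1/4$, where~\eqref{AHRineq2} degenerates and one must invoke the separate bound $N(1/4,d)\ge\log_2(d+1)$ instead; this is precisely why the statement groups $r=1/4$ with the interval $(0,1/4)$ rather than with $(1/4,1)$.
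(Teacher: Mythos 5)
Your proposal is correct and matches the paper's intended derivation exactly: the corollary is assembled from Theorem~\ref{thm1}, Theorem~\ref{thm2}, inequality~\eqref{AHRineq2}, and the auxiliary bound $N(1/4,d)\ge\log_2(d+1)$, just as the introduction indicates. Your explicit handling of the degenerate boundary case $r=1/4$ is the one point requiring care, and you got it right.
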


\section{Proofs}

\begin{proof}[Proof of Theorem \ref{thm1}]
We proceed with the proof of Theorem \ref{thm1}.

First, consider the case $r\in[1/2,1)$. We set $c_r=1$ and claim that the single central point $\frac{1}{2}\cdot\mathbf{1}\in[0,1]^d$ does not allow an empty box of volume greater than $1/2$. Let $B=I_1\times\cdots\times I_d\in\mathscr{B}_d$ be such that $\frac{1}{2}\cdot\mathbf{1}\notin X$, then there exists a coordinate $i\in[d]$ such that $1/2\not\in I_i$. Hence $|I_i|\leq 1/2$ and the claim follows.

Let us now assume $r\in(1/4,1/2)$. We set
\[\delta=r-1/4>0\text{ and }k_0=\lfloor1/\delta\rfloor\] 
and define the set 
\[X=\{k\delta\cdot\mathbf{1}:k\in[k_0]\}\cup\biggl\{\frac{1}{2}\cdot\mathbf{1}\biggr\}.\]
Note that $X$ is thus a set of points all lying on the diagonal of the unit cube and $|X|\leq k_0 + 1=c_r$. 

Let $B=I_1\times\cdots\times I_d\in\mathscr{B}_d$ be a box such that $B\cap X=\emptyset$. Again, let $i\in[d]$ be such that $1/2\not\in I_i$. If $|I_i|\leq 1/4$, then $\vol(B)\leq 1/4\leq r$, so we can assume $|I_i|>1/4\geq\delta$. Also, we have either $I_i\subset[0,1/2]$ or $I_i\subset[1/2,1]$, without loss of generality assume the former (the argument for the other case is symmetric). 

Let $\alpha\in(0,1/2]$ be the right endpoint of the interval $I_i$ and choose $k\in[k_0]$ maximal so that $k\delta<\alpha$ (such $k$ exists, as $|I_i|>\delta$). Observe that $k\delta\in I_i$, but we assumed $k\delta\cdot\mathbf{1}\not\in B$, therefore there exists an index $j\in[d]$, distinct from $i$, such that $k\delta\notin I_j$. It follows that $|I_j|\leq 1-k\delta$. Finally, by the definition of $k$, we have $\alpha-k\delta\leq\delta$ and hence $|I_j|\leq 1-\alpha+\delta$. We conclude
\[\vol(B)\leq|I_i|\cdot |I_j|\leq\alpha(1-\alpha+\delta)\leq\alpha(1-\alpha)+\delta\leq 1/4+\delta= r.\]
\end{proof}

\begin{proof}[Proof of Theorem \ref{thm2}]
To prove Theorem \ref{thm2}, we use the probabilistic method to construct a set of points that does not allow any empty box of volume greater than $r$. 

Let $q$ be defined as $\lceil 1/r\rceil$. In the following, we will assume that $1/r$ is an integer -- this is without loss of generality, as otherwise we can use the construction for $1/\lceil 1/r\rceil$. 

Let $X$ be the set of 
\[n=q^{q^2+2}(4\log q + 1)\log d\]
points inside $[0,1]^d$ chosen independently and uniformly at random from the grid $\{1/q,2/q,\ldots,(q-1)/q\}^d$. Let $\mathscr{B}_d^r\subset\mathscr{B}_d$ be the set of all boxes of volume greater than $r$. We now have to show that the probability that $X$ intersects every box from $\mathscr{B}_d^r$ is positive.

For a box $B=I_1\times\cdots\times I_d\in\mathscr{B}_d^r$, let the number of coordinates $i\in[d]$ such that $|I_i|\leq 1-r$ be equal to $m_B$. We have
\[r<\vol(B)\leq (1-r)^{m_B},\]
obtaining that $m_B\leq M$ for 
\[M=\biggl\lfloor\frac{\log r}{\log (1-r)}\biggr\rfloor.\]

The critical part of our proof is the following observation. There exists a finite family $\mathscr{Q}_d^r$ of closed boxes inside $[0,1]^d$ such that every open box $B\in\mathscr{B}_d^r$ contains an element of $\mathscr{Q}_d^r$ as a subset. Moreover, the cardinality of $\mathscr{Q}_d^r$ will be at most $(dq)^M$.

To see this, observe that for every box $B=I_1\times\cdots\times I_d\in\mathscr{B}_d^r$, we can find (possibly non-unique) indices $1\leq j_1<j_2<\cdots<j_M\leq d$ and values $k_{j_1},\ldots,k_{j_M}\in[q-1]$ such that 
\begin{enumerate}[label=(\roman*)]
\item $|I_i|> (q-1)/q\text{ for all }i\in [d]\setminus\{j_1,\ldots,j_M\}$ and\label{i}
\item $k_{j_\ell}/q\in I_{j_\ell}$ for all $\ell\in[M]$.\label{ii} 
\end{enumerate}
We can obtain \ref{i} from the fact that $(q-1)/q=1-r$ and the definition of $m_B$, and \ref{ii} from $|I_i|>r=1/q$ for all $i\in[d]$. 

For every such combination of $j_1,\ldots,j_M\in[d]$ and $k_{j_1},\ldots,k_{j_M}\in[q-1]$, we will include in $\mathscr{Q}_d^r$ the closed box $Q=J_1\times\cdots\times J_d\subset [0,1]^d$ with 
\[J_i=
\begin{cases}
[1/q,(q-1)/q]&\text{ if }i\in [d]\setminus\{j_1,\ldots,j_M\},\\
\{k_i/q\}&\text{ otherwise.}
\end{cases}\]

We claim that $\mathscr{Q}_d^r$ satisfies the desired condition. For every $B\in\mathscr{B}_d^r$ and corresponding indices $j_1,\ldots,j_M\in [d]$ and values $k_{j_1},\ldots,k_{j_M}\in[q-1]$, the closed box $Q$ as defined above is contained within $B$ by \ref{i} and \ref{ii}. Moreover, we can bound the cardinality of $\mathscr{Q}_d^r$ as follows, 
\[|\mathscr{Q}_d^r|=\binom{d}{M}(q-1)^M\leq(dq)^M.\] 

By the property of $\mathscr{Q}_d^r$, if $X$ intersects every box in $\mathscr{Q}_d^r$, then it also intersects every box in $\mathscr{B}_d^r$. We thus only need to show that with positive probability, $X$ intersects every box in $\mathscr{Q}_d^r$. 

Let us fix indices $j_1,\ldots,j_M\in[d]$ and values $k_{j_1},\ldots,k_{j_M}\in[q-1]$ and the corresponding closed box $Q=J_1\times\cdots\times J_d\in\mathscr{Q}_d^r$. Clearly, $J_i=[1/q,(q-1)/q]$ implies that any choice of $(x)_i$ will intersect $J_i$. Hence, the only coordinates restricting intersections with $X$ are $j_1,\ldots,j_M$. Moreover, for each of $(x)_{j_1},\ldots,(x)_{j_M}$, there is exactly one choice of $k\in[q-1]$ such that $k=k_{j_\ell}$.

We obtain that for a point $x\in\{1/q,2/q,\ldots,(q-1)/q\}^d$ chosen uniformly at random,
\[\mathbb{P}[x\in Q]= \biggl(\frac{1}{q-1}\biggr)^M>r^M.\]
Therefore, we get
\[\mathbb{P}[X\cap Q = \emptyset]<(1-r^M)^n\leq\exp(-nr^{M}),\]
and, by the union bound,
\begin{equation}\label{eqqq}
\begin{split}
\mathbb{P}[X\cap Q' = \emptyset\text{ for some }Q'\in\mathscr{Q}_d^r] & <(dq)^M\exp(-nr^M).
\end{split}
\end{equation}

If the right hand side of \eqref{eqqq} is bounded by one, then there exists a choice of $X$ such that it intersects every box in $\mathscr{Q}_d^r$. We bound the logarithm of the right hand side. Using the fact that $M \leq q^2$ and the definition of $n$, we obtain 
\begin{equation*}\begin{split}
M\log d+M\log q - nr^M & \leq q^2\log d + q^2\log q - nq^{-q^2}\\
                       & = q^2\log d + q^2\log q - q^2(4\log q + 1)\log d\\
                       & = (1 - 4\log d)q^2\log q \leq 0,
\end{split}
\end{equation*}
where the last inequality follows from $d\geq 2$. This concludes the proof of Theorem \ref{thm2}.
\end{proof}

\section{Acknowledgements}

The author would like to thank Dan Kr{\'a}l' for introducing him to the problem and Jan Vyb{\'i}ral for some initial thoughts and a major simplification of the proof of Theorem \ref{thm2}. The author would also like to acknowledge the discussions held in Oberwolfach Workshop Perspectives in High-Dimensional Probability and Convexity, as shared to him by Jan Vyb{\'i}ral.

  \bibliographystyle{abbrv}
  \bibliography{Dispersion_arxiv}

\begin{thebibliography}{10}

\bibitem{ahr}
C.~Aistleitner, A.~Hinrichs, and D.~Rudolf.
\newblock On the size of the largest empty box amidst a point set.
\newblock {\em Discrete Applied Mathematics}, pages~--, 2017.

\bibitem{cdl}
B.~Chazelle, R.~L. Drysdale, and D.~T. Lee.
\newblock Computing the largest empty rectangle.
\newblock {\em SIAM J. Comput.}, 15(1):300--315, 1986.

\bibitem{dt}
M.~Drmota and R.~F. Tichy.
\newblock {\em Sequences, Discrepancies and Applications}, volume 1654 of {\em
  Lecture Notes in Mathematics}.
\newblock Springer-Verlag Berlin Heidelberg, 1997.

\bibitem{DuJi13}
A.~Dumitrescu and M.~Jiang.
\newblock On the largest empty axis-parallel box amidst n points.
\newblock {\em Algorithmica}, 66(2):225--248, Jun 2013.

\bibitem{matousek}
J.~Matou{\v s}ek.
\newblock {\em Geometric Discrepancy}, volume~18 of {\em Algorithms and
  Combinatorics}.
\newblock Springer-Verlag Berlin Heidelberg, 1999.

\bibitem{nlh}
A.~Naamad, D.~Lee, and W.-L. Hsu.
\newblock On the maximum empty rectangle problem.
\newblock {\em Discrete Applied Mathematics}, 8(3):267--277, 1984.

\bibitem{nx}
H.~Niederreiter and C.~Xing.
\newblock Low-discrepancy sequences and global function fields with many
  rational places.
\newblock {\em Finite Fields and Their Applications}, 2(3):241 -- 273, 1996.

\bibitem{nr}
E.~Novak and D.~Rudolf.
\newblock Tractability of the approximation of high-dimensional rank one
  tensors.
\newblock {\em Constructive Approximation}, 43(1):1--13, Feb 2016.

\bibitem{rt}
G.~Rote and R.~Tichy.
\newblock Quasi-monte-carlo methods and the dispersion of point sequences.
\newblock {\em Mathematical and Computer Modelling}, 23(8):9 -- 23, 1996.

\bibitem{r}
D.~Rudolf.
\newblock An upper bound of the minimal dispersion via delta covers.
\newblock 2017.
\newblock Preprint, arXiv:1701.06430.

\bibitem{teml2}
V.~N. Temlyakov.
\newblock The marcinkiewicz-type discretization theorems.
\newblock 2017.
\newblock Preprint, arXiv:1703.03743.

\bibitem{teml1}
V.~N. Temlyakov.
\newblock The marcinkiewicz-type discretization theorems for the hyperbolic
  cross polynomials.
\newblock 2017.
\newblock Preprint, arXiv:1702.01617.

\bibitem{teml3}
V.~N. Temlyakov.
\newblock Universal discretization.
\newblock 2017.
\newblock Preprint, arXiv:1708.08544.

\end{thebibliography}

\end{document}